\tikzset{->-/.style={decoration={markings, mark=at position .5 with {\arrow{>}}}, postaction={decorate}}}
\definecolor{Crimson}{rgb}{0.6471, 0.1098, 0.1882}
\definecolor{Aqua}{rgb}{0.098, 0.54117, 0.68627}
\definecolor{cbGreen}{HTML}{4DAC26}
\definecolor{cbRed}{HTML}{D01C8B}
\lstdefinestyle{mystyle}{
    basicstyle=\ttfamily\scriptsize,
    breakatwhitespace=false,         
    breaklines=true,                 
    captionpos=t,                    
    keepspaces=true,                 
    showspaces=false,                
    showstringspaces=false,
    showtabs=false,                  
    tabsize=2
}
\def\ie{{\it i.e.}\ }
\def\eg{{\it e.g.}\ }
\newcommand\G{\mathcal G} 
\newcommand\PP{\mathcal P}
\newtheorem{definition}{Definition}
\newtheorem{theorem}{Theorem}
\title{A Compositional Game to Fairly Divide Homogeneous Cake}
\author{Abel Jansma$^{1, 2}$}
\date{
    \textit{\footnotesize  $^1$Max Planck Institute for Mathematics in the Sciences, Leipzig, Germany \\
     $^2$School of Informatics, University of Edinburgh, UK
    }\\[2ex]%
    \today}
\begin{document}
\maketitle
\abstract{
The central question in the game theory of cake-cutting is how to fairly distribute a finite resource among multiple players. Most research has focused on how to do this for a heterogeneous cake in a situation where the players do not have access to each other's valuation function, but I argue that even sharing homogeneous cake can have interesting mechanism design. Here, I introduce a new game, based on the compositional structure of iterated cake-cutting, that in the case of a homogeneous cake has a Nash equilibrium where each of $n$ players gets $1/n$ of the cake. Furthermore, the equilibrium distribution is the result of just $n-1$ cuts, so each player gets a contiguous piece of cake. Naive composition of the `I cut you choose' rule leads to an exponentially unfair cake distribution with a Gini-coefficient that approaches 1, and suffers from a high Price of Anarchy. This cost is completely eliminated by the proposed \textit{Biggest Player} rule for composition which achieves decentralised and asynchronous fairness at linear Robertson-Webb complexity. After introducing the game, proving the fairness of the equilibrium, and analysing the incentive structure, the game is implemented in Haskell and the Open Game engine to make the compositional structure explicit.
}

\begin{multicols}{2}
\setlength{\parindent}{0pt}
\setlength{\parskip}{0.7em}

\section{Introduction}
\subsection{Not a piece of cake} \label{sec:pieceOfCake}
Fairly distributing a finite resource among $n$ players is a challenging but relevant and ubiquitous problem. It is challenging because---in the absence of a dictatorship---the players need to agree to the mechanism that distributes the resource, and will only do so if they believe they are not at a disadvantage relative to other players. Matters are complicated further when the resource to be divided is heterogeneous and the players have different valuation functions, \ie not everybody values each part equally. In that case, what is meant with a \textit{fair} distribution is not immediately clear, and can be defined in different ways. Solutions to this problem have been extensively studied \cite{robertson1998cake}, both in the case where the shared resource is positive, in which case it is usually abstracted as a cake \cite{steinhaus1948}, and in the case where the resource is negative, in which case it is usually abstracted as a chore \cite{martin1978aha}. Focusing on the positive case, distributing the resource can thus be abstracted to the problem of fair cake-cutting. 

The most famous solution to the simple case of 2 players is called `I cut you choose', and has seen ubiquitous use. It was used by Prometheus and Zeus to divide an ox \cite{hesiodTheogony}, by the United Nations to divide oceans \cite{unclos1982}, and by children everywhere to divide actual cake. The game is structured as follows: the \textit{cutter}, $P_1$, divides the cake into two pieces, and the \textit{chooser}, $P_2$, gets to pick one of the two pieces, leaving the cutter with the leftover piece. This method is fair in the sense that the cut that maximises the cutter's payoff is the one that divides the cake into two pieces of equal value. However, if the cake is heterogeneous, then the players might disagree on the value of the different pieces. In that case, while the cutter will consider their division to be perfectly fair, the chooser might value one of the two pieces more than the other, and thus end up with a piece that contains more than half of all the value, according to the chooser's valuation. The complexity of fair cake-cutting, as well as the wide range of important applications, has led to various definitions of \textit{fairness}.

\subsection{What is fair?}

Let $P_i$, where $i \in \mathbb{N}_{\leq n} = \{1, \ldots, n\}$, be the $i$th player in an $n$-player cake-cutting game, where the cake can be represented as a set $C$. Let $X_i \subseteq C$ be the piece of cake that $P_i$ gets in a given partition $X: \mathbb{N}_{\leq n} \to \mathcal{P}(C)$, such that $\cup_{i=1}^n X_i = C$. Let $V_i: \mathcal{P}(C) \to [0, 1]$ be the valuation function of $P_i$. Three important and different ways of defining fairness are then:
\begin{itemize}
    \item \textbf{Proportionality:} $X$ is a proportional partition if $\forall i: V_i(X_i)\geq \frac{V_i(C)}{n}$. That is, if each player believes they ended up with \textit{at least} $\frac{1}{n}$ of the total cake. 
    \item \textbf{Envy-freeness:} $X$ is called envy-free if no player would want to switch their piece for someone else's, \ie if $\forall i\ \forall j: V_i(X_i)\geq V_i(X_j)$. 
    \item \textbf{Equitability:} $X$ is equitable if each player values their piece equally, \ie if $\forall i\ \forall j: V_i(X_i) = V_j (X_j)$. 
\end{itemize}

The canonical example of a game with an equilibrium that is both proportional and envy-free is the `I cut you choose' game. However, as discussed in Section \ref{sec:pieceOfCake}, there is a type of meta-envy in the sense that equitability is not guaranteed, so it is sometimes more profitable to be the chooser than the cutter. A distribution mechanism where there is no envy about role assignments is called \textit{meta-envy-free}, or \textit{symmetric fair} \cite{manabe2010}. Finally, even when a particular partition $X$ is fair in any of the senses above, it might not be \textit{optimal}---a different partition $Y$ might leave the players happier. It is said that $Y$ \textit{Pareto-dominates} $X$ when at least one player is happier with $Y$ than with $X$, and all other players are at least as happy with $Y$ as $X$, \ie if $\exists i: V_i(Y_i)>V_i(X_i)$ and $\forall i\ V_i(Y_i) \geq V_i(X_i)$. $X$ is said to be \textit{Pareto-optimal} (or \textit{Pareto-efficient}) if there is no such $Y$.

\subsection{More than 2 players}
Designing fair cake-cutting rules for more than two players is the focus of most cake-cutting research, and has led to numerous new division games. Famously, the so-called \textit{moving-knife} procedures \cite{dubins1961, stromquist1980, austin1982, peterson2002} give an envy-free solution to the 3-player game, and the \textit{last-diminisher} method \cite{steinhaus1948} offers a proportional solution to the $n$-player game, but all these approaches require continuous and universally agreed upon time, perfect and direct communication between all players, an external referee, or an unbounded number of cuts that leave each player with many disconnected pieces of cake. In fact, a finite algorithm (\ie involving at most a finite number of steps) with a proportional equilibrium distribution of contiguous pieces does not exist:

\begin{theorem}[Impossibility of fair contiguous pieces \cite{robertson1998cake}]
    No finite algorithm can guarantee each of $n$ players at least $\frac{1}{n}$ of the cake using only $n-1$ cuts when $n\geq 3$.
\end{theorem}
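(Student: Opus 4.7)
The plan is a contradiction argument using an adversarial construction of valuation profiles. Suppose, for the sake of contradiction, that a finite algorithm $A$ always gives each of $n \geq 3$ players at least $1/n$ of the cake using only $n-1$ cuts. I would formalise $A$ in the Robertson-Webb query model, so that it consists of a finite sequence of evaluation queries (``what is $V_i([a,b])$?'') and cut queries (``find $c$ with $V_i([a,c]) = v$''), followed by the output of $n-1$ cut points $0 = c_0 < c_1 < \cdots < c_{n-1} < c_n = 1$ and an assignment $\pi$ of the pieces $P_k = [c_{k-1}, c_k]$ to the players.

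First I would execute $A$ against the baseline profile in which every $V_i$ is the uniform Lebesgue measure on $[0,1]$, recording the full transcript and the resulting cuts and assignment. Because $A$ is finite, the set $Q_i \subset [0,1]$ of points that appear as endpoints of any query or response involving $V_i$ is finite, so its complement is a finite union of open gaps. Second, I would pick a player $i$ and locate two disjoint open sub-intervals $J^- \subset P_{\pi(i)}$ and $J^+ \subset P_k$ for some $k \neq \pi(i)$, both lying inside a single maximal gap of $Q_i$; since each query contributes only finitely many points to $Q_i$ while $n \geq 3$ leaves at least two pieces to thread such a gap through, such a pair must exist for some $i$. Third, I would construct $V_i'$ by starting from the uniform density and shifting almost all of its $P_{\pi(i)}$-mass out of $J^-$ and into $J^+$. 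By construction $V_i'$ agrees with $V_i$ on every interval with both endpoints in $Q_i$, and every cut-query answer for $V_i$ is preserved because the migrated mass never crosses a probe point. Re-running $A$ on the profile with $V_i$ replaced by $V_i'$ therefore produces an identical transcript and hence the same cuts and assignment, yet $V_i'(P_{\pi(i)}) < 1/n$---contradicting the proportionality guarantee.

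The main obstacle is the second step: showing that the required intra-gap donor/recipient pair exists for some player $i$ whenever $n \geq 3$. Evaluation queries are harmless because they only constrain integrals between consecutive points of $Q_i$, but every cut-query response $c$ pins down a cumulative value $V_i([a,c])$, so any mass redistribution must happen entirely inside one maximal $Q_i$-gap. I would handle this with a pigeonhole argument: the algorithm can place at most $n-1$ cut points plus finitely many further probe points along each player's axis, and even in the worst case where $A$ tries to ``fence off'' every piece of some fixed player with cut-query responses, when $n \geq 3$ there must be another player whose probe set $Q_i$ still leaves a maximal gap straddling two distinct pieces. For $n = 2$ the pigeonhole collapses: with only one cut, the cutter's single cut-query response pins $V_1([0,c_1]) = 1/2$ on both sides and the chooser's free choice among two pieces absorbs the remaining adversarial slack, which is exactly why ``I cut you choose'' evades this impossibility.
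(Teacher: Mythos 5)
The paper does not actually prove this theorem---it is quoted from Robertson and Webb with a citation---so your attempt can only be judged on its own terms. The framework you choose (Robertson--Webb query model, a valuation perturbation invisible to the transcript) is the right genre of argument, but there is a genuine gap at exactly the step you flag as the main obstacle, and the pigeonhole you propose to close it fails. A correct finite algorithm, run on the uniform profile, is free to finish with a round of ``verification'' queries: for every player $i$ and every cut point $c_j$ it is about to output, it can issue the evaluation query $V_i([0,c_j])$. That is only $n(n-1)$ extra queries, so the algorithm remains finite, and afterwards every output cut point lies in every probe set $Q_i$. Then no player has a maximal gap of $Q_i$ straddling two pieces, your donor/recipient pair $(J^-,J^+)$ exists for no $i$, and the post-hoc perturbation of a single player's valuation around the uniform run produces no contradiction. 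Nothing in the model bounds the number of probe points per player in a way that would turn ``$n\geq 3$ leaves at least two pieces to thread a gap through'' into a valid counting argument.

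The deeper issue is that perturbing one fixed, completed run cannot suffice; the adversary must be adaptive, answering queries online so as to keep alive at least two valuation profiles consistent with the transcript whose proportional contiguous divisions are incompatible with any single set of cuts and assignment the algorithm could commit to. For $n=3$ this is done by maintaining profiles in which the admissible window for a cut point (positions $c_1$ such that some player values $[0,c_1]$ at no less than $\frac{1}{3}$ while the other two can still split $[c_1,1]$ contiguously and proportionally) can always be steered, consistently with all answers given so far, to a location the algorithm has not probed. Your closing remark about $n=2$ is the correct intuition for why the bound is tight---a single half-value cut by one player is always acceptable to the other, so no exact location must be learned---but the impossibility direction for $n\geq 3$ needs the interactive adversary, not the single-run perturbation.
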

The same impossibility theorem has also been proven for envy-free distributions for $n\geq 3$ \cite{stromquist2008envy}.

However, if the cake is homogeneous, \ie if $\forall i\ V_i = V$, then proportionality, envy-freeness, equitability, and pareto-optimality all coincide, and can simply be summarised as \textit{fairness}. Dividing such a homogeneous cake is---in a sense---trivial, since there is an obvious partition that all players agree is fair (namely $\forall i: V(X_i)=\frac{1}{n}$), which the first player can immediately achieve with $n-1$ cuts. One could imagine a game that lets $P_1$ come up with the partition, but punishes $P_1$ by an arbitrary amount whenever they do not give each player $\frac{1}{n}$ of the cake. If the punishment is harsh enough, then this game has a (trivially) fair equilibrium. However, not only is this a very contrived game with uninteresting dynamics, it also requires identifying a player that can exchange information and cakes with all other players. One could get around this by passing the cake to the next player and only allowing each player to cut off a piece of size $\frac{1}{n}$, but this requires each player to know the original size of the cake. This information has to be passed on along with the cake, so requires that all players trust each other to do so honestly, even though there is a strong incentive to be dishonest. Furthermore, it is not meta-envy-free. If the partition is not absolutely perfect, then whenever $V(X_1)>\frac{1}{n}$, the other players cannot judge if this is the result of luck or malice, so they might still be envious of $P_1$. Such imperfect partitions naturally arise whenever $|C| \bmod n \neq 0$, or when perfect division is simply not possible (as is the case with real cake-cutting). 

To relax the restrictions on the communication channels, and to decentralise the power, one could define the game starting from its compositional structure. Two examples of this---one exponentially unfair and the other fair---are explored in the following sections. 

\section{Compositional cake-cutting}
The development of new mathematical \cite{ghani2018} and software \cite{hedges2022} tools based on category theory has led to significant progress in the field of compositional game theory, and in part inspired the development of the game introduced here (see the Appendix for an implementation of compositional cake cutting in the Open Game Engine DSL \cite{hedges2022}). In compositional game theory, complex games are built by connecting smaller games through interfaces and explicitly defining the flows of information and payoff. A compositional, and explicitly \textit{open} version of `I cut you choose' would be a 2-player game among $P_a$ and $P_b$ that has as an input a piece of cake $X_a$ that $P_a$ then cuts, after which $P_b$ chooses a piece. $P_a$ then gets the leftover piece, and $P_b$ leaves the game with their piece, but gets no payoff yet. This can be composed into an $n$-player game by inputting the full cake into a game with $P_1$ and $P_2$, and linking consecutive versions of this game together, each time propagating the chosen piece as the input to the next game. Denoting the 2-player game among $P_a$ and $P_b$ with piece $X_i$ as input by $\mathcal{G}(X_i, P_a, P_b)$, the 4-player composition looks like this:
\[\begin{tikzcd}
	{\mathcal{G}(C, P_1, P_2)} & {\mathcal{G}(X_2, P_2, P_3)} & {\mathcal{G}(X_3, P_3, P_4)}
	\arrow["{X_2}", from=1-1, to=1-2]
	\arrow["{X_3}", from=1-2, to=1-3]
\end{tikzcd}\]
To make this a `closed' game, the last player ($P_4$) should still get a payoff, which can just be set to whatever piece they are left with. This compositional game puts less restrictions on the communication channels among the players since it is only required that pairs can communicate. It is the simplest possible compositional rule for iterated `I cut you choose', so will be referred to as \textit{vanilla} composition. Note, however, that this game's equilibrium is far from fair. When the cake is homogeneous, each cutter will just cut whatever piece they have exactly in half. This means that the $m$th player will end up with $2^{-m}$ of the cake, except for the $n$th player who ends up with $2^{-(n-1)}$ of the cake. This cake distribution is \textit{exponentially unfair} (at least asymptotically so in $n$). The fairness of a distribution can be measured with the Gini coefficient, which is the relative mean absolute difference in wealth (\ie the amount of cake) in a population:
\begin{align}
    G = \frac{\sum_{i=1}^n \sum_{j=1}^n \mid V(X_i) - V(X_j)\mid}{2 n \sum_{k=1}^n V(X_k)}
\end{align}
When all players have non-negative valuation functions, $G$ is 0 for perfect equality, and 1 for perfect inequality (\ie a single player has all value). The Gini-coefficient of iterated `I cut you choose' under vanilla composition is shown in Figure \ref{fig:Ginicoeff}, where it can be seen that $G$ approaches its maximum value of 1 as the number of players increases. In fact, it does so very quickly, and the 6-player game already has $G \approx 0.5$, which makes it about as unfair as the worst national income inequalities in the world \cite{gini_worldBank}.

As the number of players increases, the distribution approaches $V(X_i) = |C| 2^{-i}$. One can derive an analytic expression for the Gini-coefficient of this asymptotic distribution by writing $G$ using the alternative definition of the mean difference from \cite{glasser1962variance}:
\begin{align}
    G = \frac{2 \sum_{i=1}^n i V(X_i)}{n \sum_{i=1}^n V(X_i)} - \frac{n+1}{n}
\end{align}
where now the players $P_i$ are ordered such that $V(X_1)\leq (X_2) \leq \ldots \leq (X_n)$. Since the ordering is opposite in $V(X_i) = |C| 2^{-i}$, we instead adjust the rank index to get:
\begin{align}
    G &= \frac{2 \sum_{i=1}^n (n-i) 2^{-i}}{n} - \frac{n+1}{n}\\
    \intertext{which simplifies to}
    &= \frac{2(n + 2^{1-n} - 2)}{n} - \frac{n+1}{n}\\
    &= 2^{1-n} + 1 - \frac{3}{n}
    \intertext{which, as $n\to \infty$, simply approaches}
    &= 1 - \frac{3}{n}
\end{align}
That this approximation works well even for small $n$ can be seen in Figure \ref{fig:Ginicoeff}.

The naive composition made iterated `I cut you choose' exponentially unfair, which inspired the main contribution of this manuscript: a different compositional rule, called the \textit{Biggest Player} rule, for iterated cake-cutting among $n$ players that leads to a fair equilibrium for any $n$, using just $n-1$ cuts. 

\begin{figure}[H]
    \centering
    \includegraphics[width=0.4\textwidth]{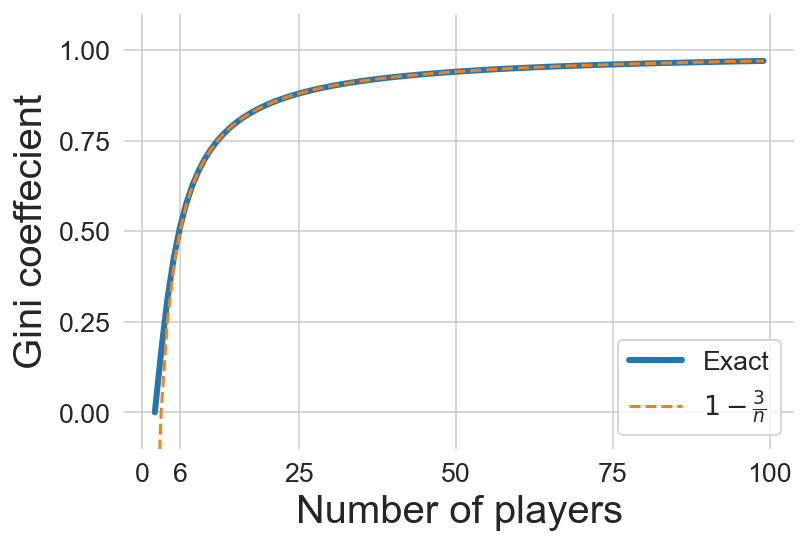}
    \caption{The Gini-coefficient of the exponentially unfair distribution of iterated `I cut you choose' as a function of the number of players.\label{fig:Ginicoeff}}
\end{figure}

\section{The \textit{Biggest Player} rule}
Consider the following game, which is just a compositional version of `I cut you choose', composed with the \textit{Biggest Player} rule:

\begin{definition}[The Biggest Player Rule for composition]
Let $\{P_1, P_2, \ldots, P_n\}$ be the players in the $n$-player game of dividing a homogeneous cake $C$. The \textit{Biggest Player} rule then prescribes the following:
    \begin{enumerate}
        \item $P_1$ cuts the cake in two, resulting in two pieces of sizes $(\alpha |C|, (1-\alpha)|C|)$, where $\alpha \in [0, 1]$. 
        \item $P_2$ chooses one of the two pieces. 
        \item If there are any players left who did not play yet: Let the last cutter and the chooser be $(P_\text{cut}, P_\text{choose})$, respectively, and the size of their pieces $(a, b)$, respectively. Then, let $P_\text{BP}=P_\text{cut}$ if $a\geq b$, and $P_\text{BP}=P_\text{choose}$ otherwise. $P_\text{BP}$ then has to cut their piece in two.
        \item A player that did not play yet chooses one of $P_\text{BP}$'s pieces. 
        \item Move to 3 if there are any players that have not played yet. 
    \end{enumerate}
\end{definition}

This is just iterated `I cut you choose', but after each round, the player who ends up with the biggest piece has to be the cutter in the next round. This can be summarised in the following diagram that shows that both a piece of cake and the identity of the next cutter are propagated to the next game:

\[\begin{tikzcd}
    {} & {\scriptstyle\mathcal{G}(C, P_1, P_2)} & {\scriptstyle{\mathcal{G}(X_{\text{BP}_1}, P_{\text{BP}_1}, P_3)}} & \ldots
    \arrow["{\scriptscriptstyle{P_1, C}}", from=1-1, to=1-2]
    \arrow["{\scriptscriptstyle{P_{\text{BP}_1}, X_{\text{BP}_1}}}", from=1-2, to=1-3]
    \arrow["{\scriptscriptstyle{P_{\text{BP}_2}, X_{\text{BP}_2}}}", from=1-3, to=1-4]
\end{tikzcd}\]

Sharing a homogeneous cake with the \textit{Biggest Player} rule has a fair equilibrium where each player ends up with a contiguous piece:
\begin{theorem}
    Cutting a homogeneous cake $C$ of size $|C|$ with the \textit{Biggest Player} rule has a Nash equilibrium at a distribution where each of $n$ players gets a contiguous piece of size $|C|/n$, involving $n-1$ cuts.
\end{theorem}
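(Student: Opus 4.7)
The plan is to prove this by backward induction on the number of players still to receive a piece, constructing a subgame-perfect (and hence Nash) equilibrium in which, at any subgame with $k$ active players and current cutter holding a piece of size $s$, the cutter's equilibrium payoff $f(k,s)$ equals $s/k$. Specialising to $k=n$ and $s=|C|$ then gives $P_1$ the claimed $|C|/n$, and, as detailed below, the same share propagates to every other player.

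The base case $k=2$ is a plain `I cut you choose' against the final chooser, where the cutter optimally bisects and $f(2,s)=s/2$. For the inductive step, I would assume $f(k-1,t)=t/(k-1)$ and analyse the current cutter's choice of $a\in[s/2,s]$ in the split $(a, s-a)$. The chooser compares taking the larger piece $a$ (thereby becoming the next Biggest Player and, by induction, collecting $a/(k-1)$ from the next subgame) against walking away with the smaller piece $s-a$. This creates a threshold at $a=(k-1)s/k$: below it the chooser takes $s-a$, above it the chooser takes $a$. The cutter's resulting payoff is $a/(k-1)$ on the left branch (cutter retains the bigger piece and becomes BP) and $s-a$ on the right branch (cutter is left with the small piece). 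Both branches are continuous and increase, respectively decrease, toward a common maximum $s/k$ at $a^{\star}=(k-1)s/k$, yielding $f(k,s)=s/k$.

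A key observation, central to the fairness claim, is that at the cut $a^{\star}$ both players obtain exactly $s/k$ regardless of which piece the chooser selects: the chooser compares $a^{\star}/(k-1)=s/k$ against $s-a^{\star}=s/k$, and symmetrically the cutter ends up with $s/k$ in either case. Iterating the induction from $k=n$ downwards, every player who leaves the game after some round does so with a piece of size $|C|/n$, while the Biggest Player carries exactly $(k-1)|C|/n$ into the next subgame, preserving the invariant. Since each round produces exactly one cut and the game terminates after $n-1$ rounds, the total number of cuts is $n-1$; and because each cut only ever subdivides the piece currently held by the BP, every departing player leaves with a single contiguous slice.

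The main obstacle I anticipate is handling the chooser's discontinuous best-response correspondence at $a=(k-1)s/k$ cleanly enough to pin down an unambiguous equilibrium profile. I would fix a tie-breaking convention (say, the chooser always picks the larger piece) and then verify that no unilateral deviation---as cutter, as chooser, or spanning multiple rounds---strictly improves any player's payoff; subgame-perfection of the backward-induction construction discharges the multi-round deviations automatically, so the verification reduces to the two single-stage best-response checks already carried out in the inductive step.
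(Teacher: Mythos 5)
Your proof is correct and follows essentially the same route as the paper: induction on the number of players, using proportionality of the continuation game (your $f(k,s)=s/k$) to compute payoffs and showing the chooser's threshold response makes the proportional cut the cutter's unique optimum. Your version is in fact somewhat more carefully worked out than the paper's---you give the full best-response curve over all cuts $a$, the explicit threshold $a^{\star}=(k-1)s/k$, and the tie-breaking/indifference analysis, where the paper only checks $\epsilon$-deviations in separate cases---but the underlying argument is the same.
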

\begin{proof}
    The $n=1$ case is trivial. The $n=2$ case reduces to the game `I cut you choose', so inherits the proportional equilibrium. The proof then follows by induction: Assume that the $n$-player implementation has a proportional equilibrium, then consider the game with $n+1$ players. In this game, a proportional first cut would result in two pieces of respective sizes $(\frac{C}{n+1}, \frac{Cn}{n+1})$. Each of the two players then ends up with either a piece of size $\frac{C}{n+1}$, or a piece of size $\frac{Cn}{n+1}$ that then has to be shared with $n$ other players. Since the $n$-player game has a proportional equilibrium (by assumption), whoever ends up with the piece of size $\frac{Cn}{n+1}$, and all remaining players, will thus also get a proportional piece of size $\frac{C}{n+1}$. This shows that a proportional first cut results in a proportional distribution. Left to show is that there are no profitable deviations from this first proportional cut.

    If $P_1$ chooses to deviate by an $\epsilon$ such that $0<\epsilon<\frac{C(n-1)}{2(n+1)}$, then the resulting pieces are of size $(\frac{C}{n+1} + \epsilon, \frac{Cn}{n+1} - \epsilon)$, the first piece being the smallest. $P_\text{BP}$ ends up with a piece of size $\frac{Cn}{n+1} - \epsilon$, to be shared among $n$ players. The proportional equilibrium of the $n$-player game would then give $P_\text{BP}$ and each remaining player a piece of size $\frac{C}{n+1} - \epsilon/n$, which is smaller than true proportionality among $n+1$ players. Since $P_2$ profits from choosing the first piece they will do so, making this deviation not profitable for $P_1$. If $\epsilon > \frac{C(n-1)}{2(n+1)}$, then the first piece is the biggest, but this simply corresponds to choosing a new deviation $\delta = \frac{C(n-2)}{n+1} - \epsilon$ to which the same reasoning applies. If $\epsilon<0$, then $P_2$ can choose the bigger piece and end up with more than $\frac{C}{n+1}$ by playing the proportional $n$-player game with the remaining players, so each of these deviations is strictly loss-inducing for $P_1$.

    The only special case is a cut that leaves two pieces of equal size. In this case, the cutting player is always at a disadvantage, as they have to proportionally share their piece with all remaining players. As long as there are remaining players, this deviation from proportional cutting is thus strictly loss-inducing. 

    Therefore, the equilibrium of proportional distribution holds for $n+1$ players as long as it holds for $n$ players. Since it holds for $n=2$, it holds in general.
\end{proof}

The \textit{Biggest Player} rule therefore leads to a fair distribution of contiguous pieces (\ie involving the minimum number of cuts), that only requires communication in pairs, and makes malicious deviations strictly loss-inducing. This method is also of exceptionally low Robertson-Webb complexity. The Robertson-Webb complexity is the number cuts plus the number of evaluations that have to be made by the players \cite{robertson1998cake,woeginger2007complexity}. Vanilla composition requires the cutter to evaluate the incoming piece and cut it, and the chooser to evaluate the two pieces, resulting in $4+3(n-2)=\mathcal{O}_{RW}(n)$ queries (4 for the first 2-player game, 3 for every player added). The \textit{Biggest Player} rule does not require any additional cuts or evaluations, so is still of linear Robertson-Webb-complexity. For heterogeneous cake, envy-free cake cutting requires at least $\mathcal{O}(n^2)$ queries \cite{procaccia2009thou}, and proportional cutting requires $\mathcal{O}(n \log n)$ queries \cite{edmonds2006cake}.

Note that for $n>2$, the deviations are not \textit{symmetrically} loss-inducing. If the cutter cannot create a perfect proportional cut due to, say, finite precision or a cake for which $|C| \bmod n \neq 0$, then the cutter is incentivised to err on the side of making the smallest piece larger than $\frac{C}{n}$. The chooser will then certainly pick the smaller piece, but the loss induced by the deviation is then shared by all remaining players, rather than the cutter alone. This is shown in Figure \ref{fig:normPayoff}, where the difference in gradient of the payoff on the two sides of the equilibrium reflects the asymmetry. It can be clearly seen that the incentive asymmetry increases with the number of players. This illustrates that when perfectly proportional cuts are not possible, the \textit{Biggest Player} rule is envy-free, but not meta-envy-free: it is again better to be the chooser than the cutter. 

\begin{figure}[H]
    \centering
    \includegraphics[width=0.4\textwidth]{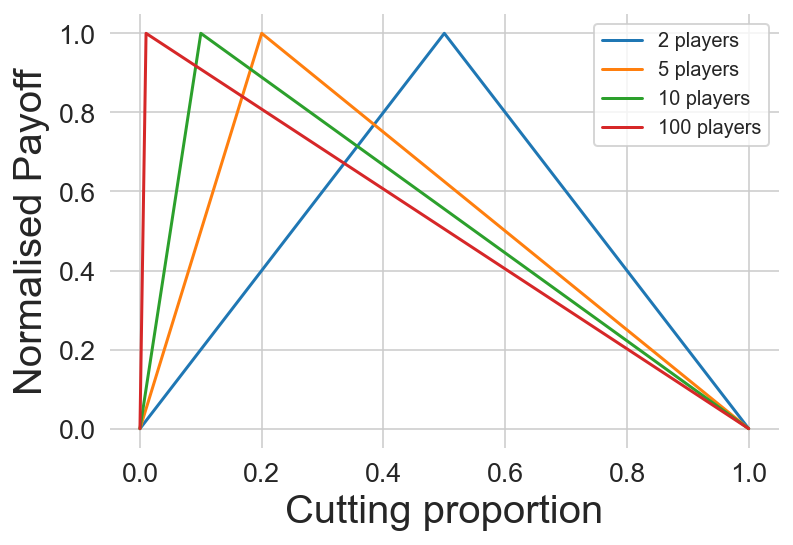}
    \caption{The (normalised) payoff is maximal for fair cuts, but the loss induced by deviations is asymmetric for $n>2$. \label{fig:normPayoff}}
\end{figure}

\section{The \textit{Biggest Player} rule eliminates the Price of Anarchy}
Vanilla composition of `I cut you choose' was shown to lead to an exponentially unfair equilibrium distribution. However, there is a trivial way to make the equilibrium fair, namely by letting a benevolent dictator impose a fair distribution. For a homogeneous cake, this is always possible and can lead to an optimally fair distribution, but comes at the price of centralised control. This unfairness is thus a reflection of the \textit{Price of Anarchy} \cite{koutsoupias1999}. This can be made precise as follows. Let the welfare $W(s)$ of a cake distribution $s$ be $W(s) = 1 - G(s)$, where $G(s)$ is the Gini coefficient of $s$. The Price of Anarchy in a given game $\G$ is defined as 
\begin{align}
    \text{PoA}^{\G} &= \frac{\text{max}_{s \in S} W(s)}{\text{max}_{s \in \tilde{S}} W(s)}\\
    &= \frac{\text{max}_{s \in S} 1 - G(s)}{\text{max}_{s \in \tilde{S}} 1 - G(s)}
\end{align}
where $S$ is the set of all cake distributions, and $\tilde{S}$ the set of equilibrium cake distributions. The Gini coefficient of vanilla composition was shown to approach 1, while a benevolent dictator could impose the perfectly fair distribution with Gini coefficient $0$. This means that the Price of Anarchy diverges for vanilla compositional cake-cutting, and in Figure \ref{fig:PoA} it can be seen that this divergence is almost perfectly linear. Using the asymptotic approximation $V(X_i)=|C|2^{-i}$, the Price of Anarchy can be seen to be $\text{PoA} = \frac{1}{1 - 1 - \frac{3}{n}} = \frac{n}{3}$, which again is a good approximation even for small $n$ (see Figure \ref{fig:PoA}). Using the \textit{Biggest Player} composition, the perfectly fair distribution is actually an equilibrium, so the Price of Anarchy is 1, which corresponds to decentralisation at no extra welfare cost. 
\begin{figure}[H]
    \centering
    \includegraphics[width=0.4\textwidth]{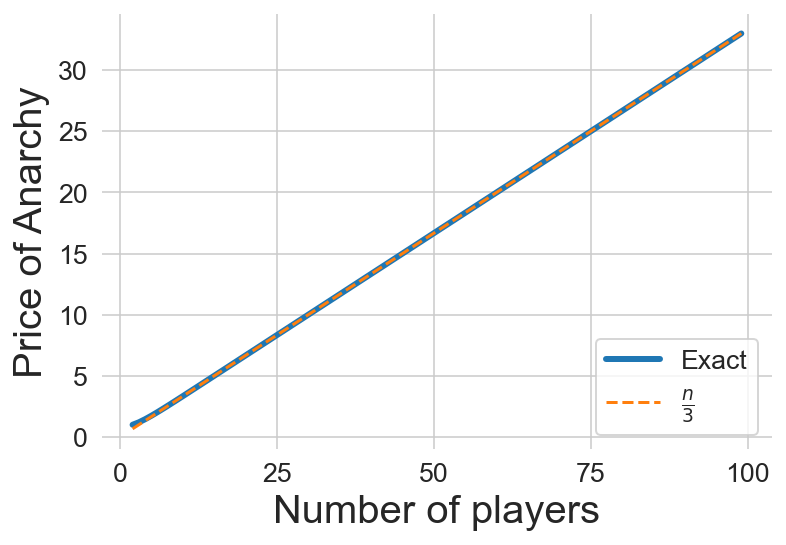}
    \caption{The Price of Anarchy grows linearly with the number of players. \label{fig:PoA}}
\end{figure}

\section{Discussion}
In this manuscript, I introduced a new compositional rule for homogeneous cake-cutting games, called the \textit{Biggest Player} rule. This rule implements iterative `I cut you choose', but lets the player with the biggest piece play again, which results in a fair equilibrium distribution. Since homogeneous cake-cutting implies that each player knows the valuation of all other players, it can also be solved by assigning a benevolent dictator who imposes a fair distribution. However, this comes at the cost of centralised control and communication, and for imperfect distributions makes it impossible to distinguish luck from malice. Both these problems are solved by the \textit{Biggest Player} rule, which eliminates the Price of Anarchy, decentralises power and communication, makes malice strictly loss-inducing, and has linear Robertson-Webb complexity. One might worry that identifying the \textit{Biggest Player} requires an external referee, but as long as all players share the same valuation function, each pair of players will agree on this issue, eliminating the need for a referee or dictator. Sharing a homogeneous resource with the \textit{Biggest Player} thus allows for fair but decentralised and asynchronous sharing, which could be beneficial in many real-life situations \cite{manabe2015cryptographic}. For example, it could be used to share a resource among a group of agents who can only communicate in pairs because they might not be in the same physical space, or on the same digital network. While the analysis here focused on a positive resource, the \textit{Biggest Player} rule can also be applied to negative resources, such as chores, where it could be used to fairly and asynchronously distribute the work among a group of people who can only communicate in pairs.

The aim of this paper has been twofold: to introduce the \textit{Biggest Player} rule, and to illustrate the power of compositional game theory as a framework for describing complex games. The \textit{Biggest Player} rule achieves fairness through the compositional structure of iterated games, not their individual implementation. Furthermore, the fact that the compositional structure was central to the definition made a software implementation natural, and analysis straightforward. As an example of this, the Appendix contains an implementation and discussion of the two iterated games in the Open Game Engine DSL \cite{hedges2022}.

\section*{Acknowledgements}
This work was supported in part by the Ethereum Foundation's Protocol Fellowship and an MRC Precision Medicine Grant (MR/N013166/1). The author thanks Rein Jansma and Barnabé Monnot for insightful discussions on game theory, and Jules Hedges, Philipp Zahn, and Fabrizio Genovese for helpful comments and suggestions regarding Open Games and the Open Game Engine. The author also thanks Gaia Joor for donating her birthday \textit{roti kukus} to science.

\bibliography{refs.bib}

\appendix

\section{Compositional cake-cutting as Open Games}
The Open Game engine \cite{hedges2022} was developed to analyse compositions of games, and is thus perfectly suited to analyse compositional cake-cutting. Here, I summarise the analysis of the two compositional cake-cutting games defined above: the exponentially unfair vanilla composition of `I cut you choose', and the \textit{Biggest Player} rule. The full code is available from \cite{Jansma2022}. 

In compositional game theory, games are modelled as a structure called \textit{lenses}. A lens $\mathcal{G}$ in a category $\mathcal{C}$ (with finite products) is an arrow between pairs of objects of $\mathcal{C}$, that is $\mathcal{G}: (X, S) \to (Y, R)$, composed of two morphisms: $f: X \to Y$ and $f^\#: X \times R \to S$. This can be represented as follows:

\begin{center} \begin{tikzpicture}
	\node (X) at (-2, .5) {$X$}; \node (Y) at (2, .5) {$Y$}; \node (R) at (2, -.5) {$R$}; \node (S) at (-2, -.5) {$S$};
	\node [rectangle, minimum height=1.5cm, minimum width=1cm, draw] (G) at (0, 0) {$\G$};
	\draw [->-] (X) to (G.west |- X); \draw [->-] (G.east |- Y) to (Y); \draw [->-] (R) to (G.east |- R); \draw [->-] (G.west |- S) to (S);
\end{tikzpicture} \end{center}

The horizontal direction can be interpreted as `time', so $f$ simply transforms $X$ into $Y$, but the morphism $f^\#$ takes the input $X$ as well as information $R$---sent back from the future---and uses these two to send information $S$ into the past. In practice, no information is being sent back and forth, and these directions rather reflect the reasoning of the agents in the game $\G$. For example, $R$ can correspond to the response of a player in some external game upon observing an input $Y$, about which the players in $\G$ can reason even before they output $Y$. 

This nicely captures the structure of compositional cake-cutting: at every round, a player is presented with two pieces of cake to choose from, and from this produces two new pieces of cake. They also inform the previous player of their choice, which they base on the cake they were presented with, as well as their reasoning about what next players are going to choose. This interpretation fills in the types on the dangling wires like this:
\begin{center} \begin{tikzpicture}
	\node (X) at (-2, .5) {$\mathbb{R}\times \mathbb{R}$}; \node (Y) at (2, .5) {$\mathbb{R}\times \mathbb{R}$}; \node (R) at (2, -.5) {$\mathbb{B}$}; \node (S) at (-2, -.5) {$\mathbb{B}$};
	\node [rectangle, minimum height=1.5cm, minimum width=1cm, draw] (G) at (0, 0) {$\PP_i$};
	\draw [->-] (X) to (G.west |- X); \draw [->-] (G.east |- Y) to (Y); \draw [->-] (R) to (G.east |- R); \draw [->-] (G.west |- S) to (S);
\end{tikzpicture} \end{center}

where $\mathbb{R} \times \mathbb{R}$ represents the sizes of the two pieces of cake being offered, and $\mathbb{B}$ represents the binary choice between the two pieces (say, 0 for the first piece, and 1 for the second). These games can then be composed sequentially, and put in a context by inputting values into all dangling wires:

\begin{center}\begin{tikzpicture}
    \node (X) at (-2, .5) {$(C, 0)$}; \node (Y) at (6, .5) {$(X_n, 0)$}; \node (R) at (6, -.5) {$\{1\}$}; \node (S) at (-2, -.5) {$\{0\}$};

    \node [rectangle, minimum height=1.5cm, minimum width=1cm, draw] (P1) at (0, 0) {$\PP_1$};
    \node [rectangle, minimum height=1.5cm, minimum width=1cm, draw] (P2) at (2, 0) {$\ldots$};
    \node [rectangle, minimum height=1.5cm, minimum width=1cm, draw] (Pn) at (4, 0) {$\PP_n$};

    \draw [->-] (X) to (P1.west |- X); 
    \draw [->-] (P1.west |- S) to (S); 

    \draw [->-] (P1.east |- Y) to node [above] {$\mathbb{R}^2$} (P2.west |- Y);
    \draw [->-] (P2.west |- R) to node [below] {$\mathbb{B}$} (P1.east |- R);
    \draw [->-] (P2.east |- Y) to node [above] {$\mathbb{R}^2$} (Pn.west |- Y);
    \draw [->-] (Pn.west |- R) to node [below] {$\mathbb{B}$} (P2.east |- R);

    \draw [->-] (Pn.east |- Y) to (Y); 
    \draw [->-] (R) to (Pn.east |- R); 
\end{tikzpicture}\end{center}

Note, however, that even though each game $\PP_i$ describes the behaviour of a single player, it is actually composed of two different actions: choosing and cutting. These can both be seen as separate games, linked only by the chosen piece, so that $\PP_i$ can be written as:
\begin{center}
    \begin{tikzpicture}
        \node (X) at (-2, .5) {$\mathbb{R}\times \mathbb{R}$}; \node (Y) at (4, .5) {$\mathbb{R}\times \mathbb{R}$}; \node (R) at (4, -.5) {$\mathbb{B}$}; \node (S) at (-2, -.5) {$\mathbb{B}$};
        \node (dY) at (4, 0.3){};
        \node [rectangle, minimum height=1.5cm, minimum width=1cm, draw] (G) at (0, 0) {$\PP_i^\text{choose}$};
        \node [rectangle, minimum height=1.5cm, minimum width=1cm, draw] (H) at (2, 0) {$\PP_i^\text{cut}$};
        \draw [->-] (X) to (G.west |- X); 
        \draw [->-] (G.east |- Y) to node [above] {$\mathbb{R}$} (H.west |- Y);
        \draw [->-] (H.east |- Y) to (Y);  
        \draw [->-] (R) to (H.east |- R); \draw [->-] (G.west |- R) to (S);
        \node [rectangle, minimum height=2.5cm, minimum width=3.6cm, draw, dashed, label=$\PP_i$] (BG) at (0.94, 0) {};
    \end{tikzpicture} 
\end{center}

Decomposing games to the fundamental actions like this reveals the compositional structure most directly, so this is the representation that the implementation in the Open Games DSL is based on.

\subsection{Vanilla compositional cake-cutting}
The interface between $\PP_\text{choose}$ and $\PP_\text{cut}$ is a single piece of cake, while from the outside $\PP_i$ looks like a game that maps cake offers into new offers, and propagates the binary choice responses along the backward wire. This is made explicit by the following implementation in the Haskell-based Open Games DSL:

\begin{lstlisting}[language=Haskell]
    openCakeCuttingUnit playerName = [opengame|

    inputs    : inputOffer  ;
    feedback  : playerResponse;
 
    :----------------------------:
    inputs    : inputOffer  ;
    feedback  : playerResponse;
    operation : CakeGame_choose playerName ;
    outputs   : chosenPiece ;
    returns   : ;
 
    inputs    : chosenPiece ;
    feedback  : ;
    operation : CakeGame_cut playerName ;
    outputs   : newOffer   ;
    returns   : newResponse;
 
    :----------------------------:
 
    outputs   : newOffer   ;
    returns   : newResponse;
    |]
\end{lstlisting}

The operations \texttt{CakeGame\_choose} and \texttt{CakeGame\_cut} are open games themselves, defined in a similar way (see \cite{Jansma2022} for the full implementation). This unit game can then be composed multiple times to very straighforwardly instantiate \eg a 3-player game as follows:

\begin{lstlisting}[language=Haskell]
openCakeSharing_threePlayers = [opengame|

   inputs    : inputOffer  ;
   feedback  : p1Response;

   :----------------------------:
   inputs    : inputOffer  ;
   feedback  : inputResponse;
   operation : openCakeCuttingUnit "p1" ;
   outputs   : newOffer1   ;
   returns   : newResponse1 ;

   inputs    : newOffer1  ;
   feedback  : newResponse1 ;
   operation : openCakeCuttingUnit "p2" ;
   outputs   : newOffer2   ;
   returns   : newResponse2 ;

   inputs    : newOffer2  ;
   feedback  : newResponse2 ;
   operation : openCakeCuttingUnit "p3" ;
   outputs   : newOffer3   ;
   returns   : newResponse3 ;
   
   :----------------------------:

   outputs   : newOffer3   ;
   returns   : newResponse3 ;
   |]
\end{lstlisting}

The equilibrium analysis, included in \cite{Jansma2022}, shows that the exponentially unfair distribution where player $i$ gets $2^{-i}$ of the cake (except for the last player who gets $2^{-n-1}$ of the cake) is indeed an equilibrium, whereas offering $\frac{1}{n}$ to the next player is not.

\subsection{Analysis of the \textit{Biggest Player} rule}
Similar to before, it makes sense to decompose the \textit{Biggest Player} compositional cake-cutting game into its fundamental games. However, while each player still has to choose a piece, not every player has to cut (players that choose the smaller of the two offered pieces do not cut). This means that the cutting game should take as an input the identity of the cutter (and, as before, be indexed by the identity of the chooser). This can be represented as follows:

\begin{center}
    \begin{tikzpicture}
        \node (X) at (-2, .2) {$\mathbb{R}$}; 
        \node (X1) at (-2, .7) {$\mathbb{N}^{\leq n}$}; 
        \node (X2) at (-2, .4) {}; 

        \node (Y) at (6, .2) {$\mathbb{R}$}; 
        \node (Y1) at (6, .7) {$\mathbb{N}^{\leq n}$};
        \node (Y2) at (6, .4) {}; 

        \node (R) at (6, -.6) {}; 
        \node (S) at (-2, -.6) {};

        \node [rectangle, minimum height=1.5cm, minimum width=1cm, draw] (G) at (0, 0) {$\PP_x^\text{cut}$};
        \node [rectangle, minimum height=0.8cm, minimum width=1cm, draw] (H) at (2, 0) {$\PP_i^\text{choose}$};
        \node [rectangle, minimum height=1.5cm, minimum width=1cm, draw] (u) at (4.2, 0) {$u(x, i)$};

        \draw [->-] (X) to (G.west |- X); 
        \draw [->-] (G.east |- Y) to node [above] {\small{$\mathbb{R}^2$}}  (H.west |- Y);
        
        \node (n) at (0.94, 0.2) {}; 
        \node [circle, scale=0.3, fill=black, draw] (o) at (1.1, 0.2) {};
        \draw [->-] (n) to [out=0, in=180] (1.5, 0.7) to (2.9, 0.7) to [out=0, in=180] (u.west |-X2) ;    
    
        \draw [->-] (H.east |- Y) to node [below] {$\mathbb{B}$} (u.west |- Y);
        \draw [->-] (u.east |- Y) to (Y);   
        \draw [->-] (u.east |- Y2) [out=0, in=180] to (Y1);   
        
        \node (m) at (-1, 0.7) {}; 
        \node (m1) at (-0.9, 0.7) {}; 
        \node (m2) at (-1.1, 0.7) {}; 
        \draw [->-] (X1) to (m1) ;   
        \node [circle, scale=0.3, fill=black, draw] (o) at (m) {};
        \draw [->-] (m2) to [out=0, in=180] (G.west |-X2) ;        
        \draw [->-] (m2) to [out=0, in=180] (0, 1) to (2.9, 1) to [out=0, in=180] (u.west |-X2) ;       

        \node [rectangle, minimum height=2.5cm, minimum width=6.8cm, draw, dashed, label=$\PP_i$] (BG) at (1.94, 0) {};
    \end{tikzpicture} 
\end{center}

where $u$ is the function that calculates and assigns payoffs and $\bullet$ is the copy operation. There is some magic happening in $u$ here: How do the players reason about their payoff if the payoff function sends no information back to them? The diagram above is actually not the proper string diagram of the full game $\PP_i$, but rather the wiring diagram of the implementation in the Open Game engine, where there is a function \texttt{addRolePayoffs} that can assign a payoff to a player. This allows the game $\G_i$ to be implemented as follows:

\begin{lstlisting}{Haskell}
bigPlayerUnit player2Name payoffBP = [opengame|

   inputs    : inputBigPlayer, inputPiece  ;
   feedback  : ;

   :----------------------------:

   //The BigPlayer offers a slice to Player 2
   inputs    : inputBigPlayer, inputPiece;
   feedback  :  ;
   operation : offerNewSlice_dependent;
   outputs   : offerP1   ;
   returns   :  ;

   //Player 2 responds
   inputs    : player2Name, offerP1   ;
   feedback  : ;
   operation : respondToOffer_dependent ;
   outputs   : responseP2   ;
   returns   :  ;


   //Find the smallest player and give them their payoff
   inputs    : inputBigPlayer, player2Name, offerP1, responseP2;
   feedback  :  ;
   operation : forwardFunction $ orderBySize ;
   outputs   : (bigPlayerName, biggestPiece), (smallPlayerName, smallestPiece);
   returns   :  ;

   inputs    : smallPlayerName, smallestPiece ;
   feedback  :  ;
   operation : addRolePayoffs;
   outputs   :  ;
   returns   :  ;

   
   //The BigPlayer only gets a payoff if they are the last player
   inputs    : bigPlayerName, biggestPiece * payoffBP ;
   feedback  :  ;
   operation : addRolePayoffs;
   outputs   :  ;
   returns   :  ;

   :----------------------------:

   outputs   : bigPlayerName, biggestPiece ;
   returns   :  ;
   |]
\end{lstlisting}

The full implementation details can be found in \cite{Jansma2022}. The 3-player game is then simply implemented as the sequential composition:

\begin{lstlisting}{Haskell}
bigPlayers_composed_3Players = [opengame|

   inputs    : inputBigPlayer, inputPiece ;
   feedback  : ;

   :----------------------------:

   inputs    : inputBigPlayer, inputPiece ;
   feedback  : ;
   operation : bigPlayer_unit "p2" 0 ;
   outputs   : bigPlayer1, piece1 ;
   returns   :  ;

   inputs    : bigPlayer1, piece1;
   feedback  : ;
   operation : bigPlayer_unit "p3" 1 ;
   outputs   : bigPlayer2, newPiece  ;
   returns   :  ;
   
   :----------------------------:

   outputs   : bigPlayer2, newPiece   ;
   returns   :  ;
   |]
\end{lstlisting}

The equilibrium analysis, included in \cite{Jansma2022}, shows that this game indeed has a fair equilibrium distribution. 

\end{multicols}

\end{document}